\newcommand{\op}[1]{\operatorname{#1}}
\renewcommand{\t}{{\scriptscriptstyle\mathsf{T}}}
\newcommand{\setft}[1]{\mathrm{#1}}
\newcommand{\lin}[1]{\setft{L}\left(#1\right)}
\newcommand{\density}[1]{\setft{D}\left(#1\right)}
\newcommand{\unitary}[1]{\setft{U}\left(#1\right)}
\newcommand{\trans}[1]{\setft{T}\left(#1\right)}
\newcommand{\rank}{\op{rank}}
\newcommand{\Sym}{\operatorname{Sym}}
\newcommand{\locc}{\operatorname{LOCC}}
\def\complex{\mathbb{C}}
\def\real{\mathbb{R}}
\def\I{\mathbb{1}}
\newenvironment{mylist}[1]{\begin{list}{}{
    \setlength{\leftmargin}{#1}
    \setlength{\rightmargin}{0mm}
    \setlength{\labelsep}{2mm}
    \setlength{\labelwidth}{8mm}
    \setlength{\itemsep}{0mm}}}
    {\end{list}}
\def\ot{\otimes}
\newcommand{\out}[2]{| #1\rangle\langle #2 |}
\newcommand{\defeq}{\stackrel{\smash{\textnormal{\tiny def}}}{=}}
\newcommand{\Herm}{\mathrm{Herm}}
\newcommand{\pa}[1]{(#1)}
\newcommand{\Pa}[1]{\left(#1\right)}
\newcommand{\Br}[1]{\left[#1\right]}
\newcommand{\set}[1]{\{#1\}}
\newcommand{\ket}[1]{|#1\rangle}
\def\Jamiolkowski{J}
\newcommand{\jam}[1]{\Jamiolkowski\pa{#1}}
\DeclareMathOperator{\vectorize}{vec}
\newcommand{\col}[1]{\vectorize\pa{#1}}
\newcommand{\row}[1]{\vectorize\pa{#1}^{\dagger}}
\DeclareMathOperator{\trace}{Tr}
\newcommand{\Ptr}[2]{\trace_{#1}\Pa{#2}}
\newcommand{\Tr}[1]{\Ptr{}{#1}}
\def\cH{\mathcal{H}}
\def\rS{\mathrm{S}}\def\rT{\mathrm{T}}
\newtheorem{thrm}{Theorem}[section]
\newtheorem{prop}[thrm]{Proposition}
\theoremstyle{definition}
\numberwithin{equation}{section}
\newcounter{questionnumber}
\begin{document}

\title{Majorization-preserving quantum channels}

\author{Lin Zhang\footnote{E-mail: godyalin@163.com;
linyz@hdu.edu.cn}\\
  {\it\small Institute of Mathematics, Hangzhou Dianzi University, Hangzhou 310018, PR~China}}

\date{}
\maketitle \mbox{}\hrule\mbox\\
\begin{abstract}

In this report, we give a characterization to those quantum channels
that preserve majorization relationship between quantum states. Some remarks are presented as well.

\end{abstract}
\mbox{}\hrule\mbox\\

\section{Introduction}

The notion of majorization has been widely used in many fields of
mathematics, such as matrix analysis, operator theory, frame theory,
and inequalities involving convex functions. The readers can be
referred to a standard literature by Marshall and Olkin
\cite{Marshall}.

This report discusses the notion of majorization and some of its
connections to quantum information. As a tool used in quantum
information theory, the main application of majorization is first
given by M. Nielsen \cite{Nielsen}. His result precisely
characterizes when it is possible for two parties to transform one
pure state into another by means of local operations and classical
communication ($\locc$). There are other interesting applications of
the notion can be found in \cite{Hiroshima,Partovi}.

\section{Preliminaries}

Let $\cH$ be a finite dimensional complex Hilbert space. A
\emph{quantum state} $\rho$ on $\cH$ is a positive semi-definite
operator of trace one, in particular, for each unit vector
$\ket{\psi} \in \cH$, the operator $\rho = \out{\psi}{\psi}$ is said
to be a \emph{pure state}. Now the set of all linear operators acting on $\cH$ denoted by $\lin{\cH}$. We denote by $\Herm(\cH)$ the set of all Hermitian operators in $\lin{\cH}$. The notation $\unitary{\cH}$ stands for the set of all unitary operator in $\lin{\cH}$. The set of all quantum states on $\cH$ is
denoted by $\density{\cH}$. For each quantum state
$\rho\in\density{\cH}$, its von Neumann entropy is defined by
$\rS(\rho) = - \Tr{\rho\log_2\rho}$. A \emph{quantum channel}
$\Phi$ on $\cH$ is a trace-preserving completely positive linear
map defined over the set $\density{\cH}$. It follows from
(\cite[Prop.~5.2 and Cor.~5.5]{Watrous}) that there exists linear
operators $\set{K_\mu}_\mu$ on $\cH$ such that $\sum_\mu
K^\dagger_\mu K_\mu = \I$ and $\Phi = \sum_\mu \mathrm{Ad}_{K_\mu}$,
that is, for each quantum state $\rho$, we have the Kraus
representation
\begin{eqnarray*}
\Phi(\rho) = \sum_\mu K_\mu \rho K^\dagger_\mu.
\end{eqnarray*}
The \emph{Choi-Jamio{\l}kowski isomorphism} for a quantum channel $\Phi\in\trans{\cH}$ is defined as
$$
\jam{\Phi} \defeq \Phi\ot\I_{\lin{\cH}}(\col{\I}\row{\I}),
$$
where $\col{\I} \defeq \sum_j\ket{jj}$. In fact, $\jam{\Phi} = \sum_\mu \col{K_\mu}\row{K_\mu}$.

\subsection{Majorization for real vectors}

For two vectors $u,v\in\real^n$, $u$ is \emph{majorized} by $v$,
denoted by $u\prec v$, if
$$
\sum^k_{i=1} u^\downarrow_i \leqslant \sum^k_{i=1} v^\downarrow_i;
k=1,\ldots,n
$$
and
$$
\sum^n_{i=1} u^\downarrow_i = \sum^n_{i=1} v^\downarrow_i,
$$
where $u^\downarrow_1\geqslant \cdots \geqslant u^\downarrow_n$ is
the decreasing rearrangement of all components of the vector $u$. In
\cite{Hardy}, Hardy, Littlewood and Polya obtained the following
characterization:
\begin{prop}[\cite{Hardy}]
Let $u,v\in\real^n$. $u$ is majorized by $v$, i.e. $u\prec v$, if
and only if $u = Bv$ for some $n\times n$ bi-stochastic matrix $B$.
\end{prop}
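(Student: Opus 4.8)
The plan is to prove the two implications separately. The direction ``$u=Bv$ for some bi-stochastic $B$ $\Rightarrow$ $u\prec v$'' is the easy half; the converse carries essentially all the work, since it requires exhibiting $B$. For sufficiency, suppose $u=Bv$ with $B$ having nonnegative entries and all row and column sums equal to $1$. Equality of totals is immediate: $\sum_i u_i=\sum_i\sum_j B_{ij}v_j=\sum_j\bigl(\sum_i B_{ij}\bigr)v_j=\sum_j v_j$. For the partial-sum inequalities I would use the variational identity
\[
\sum_{i=1}^k w^\downarrow_i=\max\Set{\ip{y}{w}:y\in\real^n,\ 0\leqslant y_i\leqslant 1\ (\forall i),\ \sum_i y_i=k},
\]
valid for every $w\in\real^n$ and every $k$ (a linear functional on this polytope attains its maximum at a vertex, and the vertices are exactly the $0/1$ vectors with $k$ ones). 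Taking $y$ optimal for $w=u$, we have $\ip{y}{u}=\ip{y}{Bv}=\ip{B^\t y}{v}$, and since $B^\t$ is again bi-stochastic the vector $B^\t y$ satisfies the same box-and-sum constraints; hence $\ip{B^\t y}{v}\leqslant\sum_{i=1}^k v^\downarrow_i$, giving $\sum_{i=1}^k u^\downarrow_i\leqslant\sum_{i=1}^k v^\downarrow_i$ for all $k$, i.e. $u\prec v$. (One could instead invoke Birkhoff's theorem to write $B$ as a convex combination of permutation matrices and use convexity of $w\mapsto\sum_{i=1}^k w^\downarrow_i$, but the variational route is self-contained.)

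For necessity, assume $u\prec v$. Since permuting the coordinates of $u$ or of $v$ changes neither the hypothesis nor the conclusion (permutation matrices being bi-stochastic), I assume $u_1\geqslant\cdots\geqslant u_n$ and $v_1\geqslant\cdots\geqslant v_n$. The construction iterates \emph{$T$-transforms}, i.e. matrices $T=\lambda I+(1-\lambda)Q_{jk}$ with $0\leqslant\lambda\leqslant 1$ and $Q_{jk}$ the transposition of coordinates $j,k$; each $T$ is bi-stochastic, and products of bi-stochastic matrices are bi-stochastic. If $v=u$ we are done. Otherwise let $j$ be the largest index with $v_j>u_j$ and $k$ the smallest index with $k>j$ and $v_k<u_k$; both exist — the first because $v\neq u$ and $\sum_i v_i=\sum_i u_i$, the second because otherwise $\sum_{i\leqslant j}v_i$ would strictly exceed $\sum_{i\leqslant j}u_i$, contradicting $u\prec v$ together with $\sum_{i<j}v_i\geqslant\sum_{i<j}u_i$. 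Then $v_j>u_j\geqslant u_k>v_k$, and every index strictly between $j$ and $k$ satisfies $v_i=u_i$. Put $t=\min\set{v_j-u_j,\ u_k-v_k}>0$ and let $v'$ be the image of $v$ under the $T$-transform on coordinates $j,k$ sending $v_j\mapsto v_j-t$ and $v_k\mapsto v_k+t$. A short check shows: $v'$ is still sorted in decreasing order (using that the intermediate coordinates equal those of $u$); $\sum_i v'_i=\sum_i u_i$; $u\prec v'$ still holds (for a cutoff $m$ with $j\leqslant m<k$ one writes $\sum_{i=1}^m v_i-\sum_{i=1}^m u_i=\bigl(\sum_{i=1}^{j-1}v_i-\sum_{i=1}^{j-1}u_i\bigr)+(v_j-u_j)\geqslant 0+t$); and $v'$ agrees with $u$ in strictly more coordinates than $v$ does, since $t$ equals one of $v_j-u_j$, $u_k-v_k$. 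Iterating, after at most $n-1$ steps we reach $u$ exactly, and the product of the $T$-transforms used, composed with the initial sorting permutations, is the desired bi-stochastic $B$.

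The main obstacle is the bookkeeping in the necessity direction: verifying that one $T$-transform simultaneously preserves sortedness, preserves the total, preserves the majorization $u\prec v'$, and makes definite progress toward $u$. Only the preservation of $u\prec v'$ is nontrivial, and it splits into the three ranges $m<j$, $j\leqslant m<k$, $m\geqslant k$ for the index $m$ in $\sum_{i=1}^m$; for the middle range the computation displayed above does it, and for the other two the partial sums of $v'$ coincide with those of $v$. Everything else — bi-stochasticity of $T$-transforms and of their products, the two trace identities, and the reduction to the sorted case — is routine.
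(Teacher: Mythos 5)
The paper does not actually prove this proposition---it is quoted as a classical result of Hardy, Littlewood and P\'olya with only a citation---so there is no in-paper argument to compare against. Judged on its own, your proof is correct and is essentially the standard one. The sufficiency half via the variational identity $\sum_{i=1}^k w^\downarrow_i=\max\{\ip{y}{w}: 0\leqslant y_i\leqslant 1,\ \sum_i y_i=k\}$ is sound ($B^\t y$ does remain in the hypersimplex because $B^\t$ is again bi-stochastic), and it is an acceptable substitute for the more common Birkhoff-plus-convexity route. The necessity half is the classical $T$-transform induction, and all the key verifications (sortedness of $v'$, preservation of $u\prec v'$ on the three ranges of the cutoff, strict increase in the number of coordinates of agreement) are present and correct. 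The only spot worth tightening is the existence of the index $k$: as written, ``$\sum_{i\leqslant j}v_i$ would strictly exceed $\sum_{i\leqslant j}u_i$'' is not by itself a contradiction with $u\prec v$. The clean statement is: if no $k>j$ had $v_k<u_k$, then by maximality of $j$ one has $v_i=u_i$ for all $i>j$, whence $\sum_{i\leqslant j}v_i=\sum_{i\leqslant j}u_i$ by equality of totals, contradicting $\sum_{i\leqslant j}v_i\geqslant\sum_{i<j}u_i+v_j>\sum_{i\leqslant j}u_i$. With that sentence repaired the proof is complete.
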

Recall that a matrix $B=[b_{ij}]$ is called \emph{stochastic matrix}
if
$$
b_{ij}\geqslant0\quad\text{and}\quad \sum^n_{i=1} b_{ij}=1.
$$
Furthermore, if the stochastic matrix $B$ satisfy the identity
$\sum^n_{j=1} b_{ij}=1$ as well, then it is called
\emph{bi-stochastic matrix}. Let us write $\Sym(\Sigma)$, for
instance $\Sigma=\set{1,\ldots,n}$, to denote the set of one-to-one
and onto functions of the form $\pi:\Sigma\to\Sigma$ (or, in other
words, the \emph{permutation} of $\Sigma$). For each
$\pi\in\Sym(\Sigma)$, we define a matrix $P_\pi:\real^n\to\real^n$
as
$$
(P_\pi)_{ij} =
\begin{cases}
1, & \text{if}\ i=\pi(j);\\
0, & \text{oherwise}.
\end{cases}
$$
It is clear that every permutation matrix is bi-stochastic, and that
the set of bi-stochastic matrices is a convex set. The following
famous theorem establishes that the bi-stochastic matrices are, in
fact, given by the convex hull of the permutation matrices \cite{Watrous}.

\begin{prop}[The Birkhoff-von Neumann theorem]
Let $B:\real^n\to\real^n$ be a linear map. Then $B$ is a
bi-stochastic matrix if and only if there exists a probability
vector $p \in\real^{n!}$ such that
$$
B = \sum_{\pi\in\Sym(\Sigma)} p(\pi) P_\pi,
$$
where $\Sigma=\set{1,\ldots,n}$.
\end{prop}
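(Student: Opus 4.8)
The plan is to treat the two implications separately. The ``if'' direction is immediate: each $P_\pi$ is bi-stochastic, the set of bi-stochastic matrices is convex (as already observed), and hence any convex combination $\sum_{\pi\in\Sym(\Sigma)} p(\pi)P_\pi$ with $p$ a probability vector is again bi-stochastic. All the content lies in the ``only if'' direction, which I would prove by induction on the number $N$ of strictly positive entries of $B$.

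Since every row and every column of a bi-stochastic $B$ sums to $1$, each row and each column has at least one positive entry, so $N \geqslant n$; and if $N = n$ then $B$ has exactly one positive entry per row and per column, which is forced to equal $1$, so $B = P_\pi$ for the permutation it encodes. For the inductive step ($N > n$) the crucial claim is that there exists $\pi \in \Sym(\Sigma)$ with $b_{i\,\pi(i)} > 0$ for all $i$. I would obtain this from Hall's marriage theorem applied to the bipartite graph on rows and columns in which $i$ is joined to $j$ precisely when $b_{ij} > 0$: for any set $S$ of rows with column-neighbourhood $T$, the total mass $\sum_{i\in S}\sum_j b_{ij} = \abs{S}$ is carried entirely by the columns of $T$, each of which carries mass at most $1$, so $\abs{S} \leqslant \abs{T}$ and Hall's condition holds.

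Given such a $\pi$, put $\alpha = \min_i b_{i\,\pi(i)} > 0$. If $\alpha = 1$ then $B = P_\pi$ and we are done. Otherwise $0 < \alpha < 1$, and $B' \defeq (1-\alpha)^{-1}(B - \alpha P_\pi)$ has nonnegative entries and all row and column sums equal to $1$, hence is bi-stochastic; moreover it has strictly fewer positive entries than $B$, since the entry attaining the minimum becomes zero while no zero entry of $B$ is turned positive (we only subtract at positions $(i,\pi(i))$, all of which were already positive). By the induction hypothesis $B' = \sum_{\pi'} p'(\pi')P_{\pi'}$ for some probability vector $p'$, and then $B = \alpha P_\pi + (1-\alpha)\sum_{\pi'} p'(\pi')P_{\pi'}$ is the desired convex combination.

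The step I expect to require the most care is the existence of the perfect matching $\pi$: one must either invoke Hall's theorem (equivalently the Frobenius--K\"onig theorem on permanents) as an external input or verify it inline, and one must also check that the subtraction genuinely lowers the count $N$ so that the induction terminates. As an alternative I would note the convexity-theoretic route that bypasses Hall's theorem: the bi-stochastic matrices form a compact convex polytope in $\real^{n\times n}$ whose extreme points are exactly the permutation matrices — any non-permutation bi-stochastic matrix supports a cycle of positive entries along which it can be perturbed in both directions, hence is not extreme — so the Minkowski/Krein--Milman theorem yields the claim.
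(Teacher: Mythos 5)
Your proof is correct, but note that the paper does not actually prove this proposition: it is stated as a known result with a citation to Watrous's lecture notes, so there is no in-paper argument to compare against. Your argument --- induction on the number of positive entries, with Hall's marriage theorem supplying the permutation $\pi$ supported on positive entries and the subtraction $B \mapsto (1-\alpha)^{-1}(B-\alpha P_\pi)$ strictly reducing that count --- is the standard proof of the Birkhoff--von Neumann theorem, and all the delicate points (Hall's condition via mass counting, the case $\alpha=1$, and the strict decrease guaranteeing termination) are handled correctly; the Krein--Milman alternative you sketch is also a valid route.
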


We say that $T:\real^n\to\real^n$ preserves the majorization if $Tx\prec Ty$ whenever $x\prec y$, i.e.,
$$
x\prec y\Longrightarrow Tx\prec Ty.
$$
In \cite{Ando}, Ando characterized the structure of linear map over $\real^n$ that preserves the majorization between two vectors.

\begin{prop}[\cite{Ando}]\label{prop:ando} Let $T:\real^n\to\real^n$ be a linear
map. Then $Tu\prec Tv$, whenever $u\prec v$, if and only if one of
the following conditions hold.
\begin{enumerate}[(i)]
\item $Tu = \Tr{u}a$ for some $a\in\real^n$.
\item $Tu = \alpha Pu + \beta \Tr{u}e$ for some $\alpha,\beta\in\real$
and permutation $P:\real^n\to\real^n$.
\end{enumerate}
\end{prop}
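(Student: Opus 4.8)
The plan is to prove both implications. The ``if'' direction is a direct verification. In case~(i), $u\prec v$ forces $\Tr{u}=\Tr{v}$ (here $\Tr{\cdot}$ denotes the sum of the components), hence $Tu=\Tr{u}\,a=\Tr{v}\,a=Tv$ and trivially $Tu\prec Tv$. In case~(ii), I would use the characterization that $u\prec v$ iff $u=Bv$ for some bi-stochastic $B$: since $\Tr{u}\,e=\Tr{v}\,e$ and $Be=e$, one has $\alpha Pu+\beta\Tr{u}\,e=(PBP^{-1})\bigl(\alpha Pv+\beta\Tr{v}\,e\bigr)$ with $PBP^{-1}$ again bi-stochastic, whence $Tu\prec Tv$; this works for all real $\alpha,\beta$ and all $a$. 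For the ``only if'' direction, assume $T$ preserves majorization. The first step is to restrict to the hyperplane $H_0$ of trace-zero vectors: the origin is the unique $\prec$-minimum of $H_0$ (the sum of the $k$ largest entries of a mean-zero vector is nonnegative), so $0\prec x$ for every $x\in H_0$, and applying $T$ gives $0=T0\prec Tx$, which forces $\Tr{Tx}=0$. Hence $T(H_0)\subseteq H_0$; writing $S:=T|_{H_0}$ and $u=\tfrac{\Tr{u}}{n}e+u_0$ with $u_0\in H_0$, we obtain $Tu=\tfrac{\Tr{u}}{n}Te+Su_0$, so it remains to classify the majorization-preserving linear maps $S$ of $H_0$ and then to identify $Te$.

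For the classification of $S$, I would exploit that $u\prec v$ and $v\prec u$ hold simultaneously exactly when $u$ and $v$ are permutations of one another; consequently $T$ sends each permutation orbit into a single permutation orbit, and in particular $SP_\pi x$ is a rearrangement of $Sx$ for all $x\in H_0$ and all $\pi\in\Sym(\Sigma)$. Evaluating this on the spanning family $f_i:=e_i-\tfrac1n e$ of $H_0$, the vectors $Sf_1,\dots,Sf_n$ are mutually rearrangements of one another and sum to $0$ (since $\sum_i f_i=0$); classifying the configurations that can occur this way forces either $S=0$ or $S=\alpha\,(P_\tau|_{H_0})$ for some permutation $\tau$ and some scalar $\alpha\neq0$. (A more conceptual route to the same dichotomy: on the coordinate-distinct locus the rearranging permutation is unique and locally constant, so a connectedness argument promotes the orbit property to an intertwining identity $SP_\pi=P_{\rho(\pi)}S$ with $\rho$ a homomorphism, after which the irreducibility of the standard representation $H_0$ of $\Sym(\Sigma)$ together with Schur's lemma finishes the generic case; the case in which $Sx$ is never coordinate-distinct is a low-rank degeneration, which a direct use of $\prec$-preservation rules out unless $S=0$.)

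Step~(b) completes the proof. If $S=0$ then $Tu=\tfrac{\Tr{u}}{n}Te=\Tr{u}\,a$ with $a:=\tfrac1n Te$, which is case~(i). If $S=\alpha\,(P_\tau|_{H_0})$ with $\alpha\neq0$, then $S$ is invertible on $H_0$; since the uniform vector $e$ is the $\prec$-minimum of $\{y:\Tr{y}=n\}$, we have $Te\prec Te+Sx$ for every $x\in H_0$, and as $Sx$ ranges over all of $H_0$ this forces $Te$ to be majorized by every vector of its own trace, hence $Te=ce$ for some scalar $c$. Substituting yields $Tu=\tfrac{c\,\Tr{u}}{n}e+\alpha\bigl(P_\tau u-\tfrac{\Tr{u}}{n}e\bigr)=\alpha P_\tau u+\beta\Tr{u}\,e$ with $\beta=\tfrac{c-\alpha}{n}$, which is case~(ii).

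I expect the classification of $S$ on $H_0$ to be the main obstacle: converting the purely combinatorial ``orbit-into-orbit'' property into a clean algebraic normal form. The delicate points are that the coordinate-distinct locus of $H_0$ is disconnected --- a disjoint union of Weyl chambers --- so one must rule out the rearranging permutation jumping between chambers (this can only happen when $S$ is degenerate, a case that then has to be treated separately and forced down to $S=0$), and that even once an intertwining relation is available one still needs the irreducibility of the standard representation (or an explicit coordinate computation on the $f_i$) to conclude. The remaining ingredients --- the ``if'' direction, the reduction $T(H_0)\subseteq H_0$, and the recovery of $Te$ from the minimality of the uniform vector --- are routine.
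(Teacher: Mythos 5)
The paper does not actually prove this proposition --- it is quoted from Ando's survey as a known result --- so there is no internal proof to compare yours against; I can only judge your outline on its own terms. The peripheral steps are correct and standard: the ``if'' direction via conjugating the bi-stochastic matrix by $P$, the reduction $T(H_0)\subseteq H_0$ from $0\prec x$ for trace-zero $x$, and the recovery of $Te=ce$ from the $\prec$-minimality of the uniform vector on its trace hyperplane. The substantive issue is the classification of $S=T|_{H_0}$, which you rightly identify as the main obstacle but which, as written, contains a genuine gap.

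Your primary route claims that the constraints ``$Sf_1,\dots,Sf_n$ are mutually rearrangements of one another and sum to $0$'' already force $S=0$ or $S=\alpha\,(P_\tau|_{H_0})$. That is false. For $n=3$ take $Sf_1=(1,-1,0)$, $Sf_2=(-1,0,1)$, $Sf_3=(0,1,-1)$: these are mutual rearrangements summing to zero, yet the resulting $S$ is not of the form $\alpha\,(P_\tau|_{H_0})$ (every vector $\alpha P_\tau f_i$ has a repeated entry, while $(1,-1,0)$ does not), and indeed this $S$ fails to preserve majorization because $S(e_1-e_2)=(2,-1,-1)$ is not a rearrangement of $S(e_2-e_1)=(-2,1,1)$ even though $e_1-e_2$ and $e_2-e_1$ are permutations of each other. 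So the orbit-into-orbit property must be exploited on more than the single orbit of the $f_i$ --- for instance on the orbit of $e_1-e_2$ --- before any normal form can be extracted. Your fallback route (upgrade the orbit property to an intertwining relation $SP_\pi=P_{\rho(\pi)}S$ and invoke irreducibility of the standard representation) is the right idea, but it is only sketched: the well-definedness and multiplicativity of $\rho$ across Weyl chambers, the degenerate case in which $Sx$ never has distinct coordinates, and, for $n=6$, the possibility that $\rho$ is an outer automorphism (which sends the standard representation to an inequivalent irreducible and therefore forces $S=0$ rather than a scalar) all still need to be written out. Until that central step is completed, what you have is a credible plan rather than a proof.
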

Note that throughout this paper the trace of a vector means: $\Tr{u} = \sum_j u_j$.

\subsection{Majorization for Hermitian operators}

We will now define an analogous notion of majorization for Hermitian
operators. For Hermitian operators $A,B\in\Herm(\cH)$, we say that
$A$ is \emph{majorized} by $B$, denoted by $A\prec B$, if there
exists a \emph{mixed unitary channel} $\Phi\in\trans{\cH}$ such that
$A=\Phi(B)$, i.e.
$$
A= \Phi(B) = \sum_j p_j U_j B U_j^\dagger, \quad (\forall j:
p_j\geqslant0;\sum_j p_j=1).
$$
Inspiration for this definition partly comes from the Birkhoff-von
Neumann theorem, the following theorem gives an alternate
characterization of this relationship that also connects it with
majorization for real vectors.

Denote by $\lambda(X)$ the all eigenvalues of $X\in\Herm(\cH)$ arranged as a vector. In particular, $\lambda^\downarrow(X)$ stands for the modified vector $\lambda(X)$ whose components are arranged in decreasing order.

\begin{prop}
Let $A,B\in\Herm(\cH)$. Then $A\prec B$ if and only if
$\lambda(A)\prec\lambda(B)$.
\end{prop}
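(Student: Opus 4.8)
The plan is to prove the two implications separately, using the Hardy--Littlewood--P\'olya characterization and the Birkhoff--von Neumann theorem (both recalled above) as the bridge between the operator notion of majorization and the real-vector notion. Throughout, write $n=\dim\cH$, and note that both hypotheses silently force $\Tr{A}=\Tr{B}$, so that condition need not be tracked separately.

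For the ``only if'' direction I would start from a mixed unitary representation $A=\sum_j p_j U_j B U_j^\dagger$, fix orthonormal eigenbases $\set{\ket{e_i}}$ of $A$ and $\set{\ket{f_k}}$ of $B$ with eigenvalue vectors $\lambda(A),\lambda(B)$, and compute
$$
\lambda_i(A) = \langle e_i | A | e_i \rangle = \sum_j p_j\, \langle e_i | U_j B U_j^\dagger | e_i \rangle = \sum_k D_{ik}\,\lambda_k(B),
\qquad D_{ik} \defeq \sum_j p_j\,\abs{\langle e_i | U_j | f_k\rangle}^2 .
$$
The point is that for each fixed $j$ the matrix $\big[\,\abs{\langle e_i|U_j|f_k\rangle}^2\,\big]_{i,k}$ is bi-stochastic, since its row sums and column sums are $\langle e_i|U_jU_j^\dagger|e_i\rangle=1$ and $\langle f_k|U_j^\dagger U_j|f_k\rangle=1$ respectively; hence $D$, being a convex combination of bi-stochastic matrices, is bi-stochastic. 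Then $\lambda(A)=D\lambda(B)$, and the Hardy--Littlewood--P\'olya proposition gives $\lambda(A)\prec\lambda(B)$.

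For the ``if'' direction I would run this in reverse. Given $\lambda(A)\prec\lambda(B)$, Hardy--Littlewood--P\'olya yields a bi-stochastic $B_0$ with $\lambda(A)=B_0\lambda(B)$, and Birkhoff--von Neumann writes $B_0=\sum_\pi p(\pi)P_\pi$ for a probability vector $p$ on $\Sym(\Sigma)$, $\Sigma=\set{1,\dots,n}$. Let $Q_\pi\in\unitary{\cH}$ be the permutation unitary acting on a fixed orthonormal basis by $Q_\pi\ket{k}=\ket{\pi(k)}$, so that $Q_\pi\,\mathrm{diag}(d)\,Q_\pi^\dagger=\mathrm{diag}(P_\pi d)$ for every $d\in\real^n$. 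Taking spectral decompositions $A=V\,\mathrm{diag}(\lambda(A))\,V^\dagger$ and $B=W\,\mathrm{diag}(\lambda(B))\,W^\dagger$ with $V,W$ unitary, I get
$$
\mathrm{diag}(\lambda(A)) = \mathrm{diag}(B_0\lambda(B)) = \sum_\pi p(\pi)\,\mathrm{diag}(P_\pi\lambda(B)) = \sum_\pi p(\pi)\, Q_\pi\,\mathrm{diag}(\lambda(B))\,Q_\pi^\dagger ,
$$
and conjugating by $V$ while substituting $\mathrm{diag}(\lambda(B))=W^\dagger B W$ gives $A=\sum_\pi p(\pi)\,U_\pi B U_\pi^\dagger$ with $U_\pi\defeq V Q_\pi W^\dagger$ unitary. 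Thus $A=\Phi(B)$ for the mixed unitary channel $\Phi=\sum_\pi p(\pi)\,\mathrm{Ad}_{U_\pi}$, i.e.\ $A\prec B$.

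The computations are essentially bookkeeping, so I do not expect a serious obstacle. The one genuinely substantive step is the observation in the ``only if'' part that the entrywise squared moduli of a unitary form a bi-stochastic matrix (unistochasticity); the rest is just assembling the two cited combinatorial theorems and keeping straight the identity that conjugation by a permutation unitary implements the corresponding permutation matrix on diagonals. I would also remark that the same argument shows every channel appearing here can be taken to be a \emph{random permutation} channel relative to suitable eigenbases, not merely a general mixed unitary channel.
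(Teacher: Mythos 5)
Your proof is correct. The paper itself states this proposition without supplying a proof, so there is nothing to compare line by line; but your argument is the standard one and uses exactly the two ingredients the paper has just recalled, namely the Hardy--Littlewood--P\'olya characterization (via the unistochasticity of $\bigl[\,\abs{\langle e_i|U_j|f_k\rangle|^2}\,\bigr]$ for the forward direction) and the Birkhoff--von Neumann theorem together with permutation unitaries $U_\pi = VQ_\pi W^\dagger$ for the converse. Both implications check out, and your closing remark that the mixed unitary channel can always be taken to be a random permutation channel conjugated by the fixed eigenbasis changes is a correct and worthwhile strengthening.
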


Concerning linear maps on matrices, the structure of linear maps that preserve a quantity or property
is a hot topic. In particular, for a quantum channel preserving majorization, it is essential for discussing the von Neumann entropy preservation of quantum channels. Denote by $\trans{\cH}$ the set of all linear super-operators
$$
\Phi : \lin{\cH}\to\lin{\cH}.
$$
Specifically, we say that $\Phi\in\trans{\cH}$ preserves the majorization if $\Phi(A)\prec\Phi(B)$ whenever $A\prec B$ over $\Herm(\cH)$, i.e.,
$$
A\prec B\Longrightarrow \Phi(A)\prec\Phi(B).
$$
A similar result to Proposition~\ref{prop:ando} was obtained by Hiai in \cite{Hiai}. This result can be described as follows:

\begin{prop}[\cite{Hiai}]\label{prop:hiai}
Let $\Phi\in\trans{\cH}$ be a linear super-operator. Then the
following statements are equivalent:
\begin{enumerate}[(i)]
\item $\Phi$ preserves majorization, that is, $\Phi(A)\prec
\Phi(B)$ whenever $A\prec B$, where $A,B\in\Herm(\cH)$;
\item either there exists an $A_0\in\Herm(\cH)$ such that
\begin{enumerate}[(a)]
\item $\Phi(X) = \Tr{X}A_0$, for all $X\in\Herm(\cH)$,
\end{enumerate}
or there exist a unitary $U\in\unitary{\cH}$ and
$\alpha,\beta\in\real$ such that $\Phi$ has one of the following
forms:
\begin{enumerate}
\item[(b)] $\Phi(X) = \alpha UXU^\dagger + \beta\Tr{X}\I$, for all
$X\in\Herm(\cH)$;
\item[(c)] $\Phi(X) = \alpha UX^\t U^\dagger + \beta\Tr{X}\I$, for all
$X\in\Herm(\cH)$.
\end{enumerate}
\end{enumerate}
\end{prop}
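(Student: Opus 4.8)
\section*{Proof proposal}

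The plan is to prove the two implications separately, the bulk of the work lying in $(i)\Rightarrow(ii)$.

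For $(ii)\Rightarrow(i)$ I would simply verify that each of the three forms preserves majorization, using the preceding proposition ($A\prec B$ iff $\lambda(A)\prec\lambda(B)$) together with the Hardy--Littlewood--Polya characterization. Form (a) is immediate: if $A\prec B$ then $\Tr{A}=\Tr{B}$, hence $\Phi(A)=\Phi(B)$ and trivially $\Phi(A)\prec\Phi(B)$. For (b), the key point is that the eigenvalues of $\alpha UXU^\dagger+\beta\Tr{X}\I$ are exactly the numbers $\alpha\lambda_i(X)+\beta\Tr{X}$ (as a multiset), where $e=(1,\dots,1)$; writing $\lambda(A)=B_0\lambda(B)$ for a bi-stochastic matrix $B_0$ on $\real^n$ and using $B_0e=e$ gives $\alpha\lambda(A)+\beta\Tr{A}\,e=B_0\bigl(\alpha\lambda(B)+\beta\Tr{B}\,e\bigr)$, so the eigenvalue vectors of $\Phi(A)$ and $\Phi(B)$ are again related by a bi-stochastic matrix, whence $\Phi(A)\prec\Phi(B)$. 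Form (c) is identical, since $X^\t$ and $X$ have the same eigenvalues.

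For $(i)\Rightarrow(ii)$ the first observation is a two-sided trick: if $\lambda(A)=\lambda(B)$ then $A\prec B$ and $B\prec A$, hence $\Phi(A)\prec\Phi(B)$ and $\Phi(B)\prec\Phi(A)$, so $\lambda(\Phi(A))=\lambda(\Phi(B))$. Thus $\Phi$ carries each unitary orbit into a set of operators with one common spectrum; in particular all rank-one projections are sent to operators with a single fixed spectrum $s$. The second preliminary reduction is that $\Tr{\Phi(X)}=\gamma\Tr{X}$ for a fixed $\gamma\in\real$: whenever $B$ is non-scalar, $\tfrac{1}{n}\Tr{B}\,\I$ lies in the relative interior of the convex body $\{A:A\prec B\}$ inside the hyperplane $\{X:\Tr{X}=\Tr{B}\}$, so $\tfrac{1}{n}\Tr{B}\,\I+\epsilon W\prec B$ for small $\epsilon>0$ and every traceless $W$; since majorization preserves the trace, $\Tr{\Phi(\cdot)}$ is constant on each trace hyperplane away from the scalars, hence everywhere by continuity, and linearity yields the claim. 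Replacing $\Phi$ by $\Phi_0(X):=\Phi(X)-\tfrac{\gamma}{n}\Tr{X}\,\I$ --- which still preserves majorization, because the subtracted term is a scalar operator and hence fixed by every mixed-unitary channel --- reduces us to classifying traceless majorization-preserving super-operators $\Phi_0$.

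The final and hardest step classifies such a $\Phi_0$. If the common spectrum $s_0$ of the images of pure states is zero, then $\Phi_0$ annihilates every pure state, and since rank-one projections span $\Her{\cH}$ we get $\Phi_0\equiv0$, i.e. $\Phi$ has form (a) with a scalar $A_0$. If $s_0\ne0$, one must establish the dichotomy: either $\psi\mapsto\Phi_0(\ket{\psi}\bra{\psi})$ is constant, which forces $\Phi_0(X)=\Tr{X}A_0'$ and again gives form (a); or $\Phi_0$ is, up to a nonzero real scalar, conjugation by a fixed unitary, possibly composed with transposition, which gives forms (b) and (c). I expect this dichotomy to be the main obstacle: the restriction of $\Phi_0$ to a maximal commutative $\ast$-subalgebra cannot by itself detect the transpose, so the full operator structure is needed. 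The natural route is to mirror Ando's argument behind Proposition~\ref{prop:ando}: restrict $\Phi_0$ to each maximal commutative $\ast$-subalgebra to obtain, via that proposition, a scalar multiple of a permutation plus a trace term, show the scalar does not depend on the subalgebra, and then patch the local permutations into a global symmetry --- the freedom in this patching being precisely what produces either a unitary conjugation or its composition with the transpose. Equivalently, one can argue that a non-degenerate traceless majorization-preserving $\Phi_0$ preserves the spectrum up to a fixed nonzero scalar and invoke the classical description of spectrum-preserving linear maps on $\Her{\cH}$. Ruling out the intermediate ``partially collapsing'' behaviour and pinning down the global unitary is where the real effort lies.
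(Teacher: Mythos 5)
The paper does not actually prove this proposition: it is imported verbatim from Hiai's article \cite{Hiai} and used as a black box, so there is no in-paper argument to compare yours against. Judged on its own terms, your proposal is correct and complete for the direction $(ii)\Rightarrow(i)$ and for the two preliminary reductions in $(i)\Rightarrow(ii)$: the observation that $\Phi$ maps unitary orbits to sets with a common spectrum, the argument that $\Tr{\Phi(X)}=\gamma\Tr{X}$ via the fact that $\tfrac{1}{n}\Tr{B}\,\I$ lies in the relative interior of $\{A: A\prec B\}$ for non-scalar $B$ (this is sound: the convex hull of the unitary orbit of a non-scalar Hermitian operator has full dimension in its trace hyperplane and has the scalar operator as its Haar barycenter), and the passage to $\Phi_0(X)=\Phi(X)-\tfrac{\gamma}{n}\Tr{X}\,\I$, which indeed still preserves majorization because the subtracted term commutes with every mixed-unitary channel.

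The genuine gap is exactly where you say it is: the dichotomy in the non-degenerate case. Everything up to that point is soft; the entire content of Hiai's theorem is the claim that a traceless majorization-preserving $\Phi_0$ that does not factor through the trace must be a real multiple of $X\mapsto UXU^\dagger$ or $X\mapsto UX^{\t}U^\dagger$. You offer two routes --- patching Ando's permutation data across maximal commutative $\ast$-subalgebras, or reducing to the Marcus--Moyls classification of spectrum-preserving linear maps --- but neither is carried out. The first requires a nontrivial gluing argument (showing the local permutations of eigenbases assemble into a single unitary or antiunitary, a Wigner-type step), and the second requires first \emph{proving} that $\Phi_0$ scales the spectrum of every Hermitian operator by a fixed nonzero constant, which does not follow merely from the orbit-to-common-spectrum observation (that gives a well-defined map on spectra, not yet that it is a global scaling). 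Ruling out ``partially collapsing'' maps, e.g.\ ones that contract the spectrum nonlinearly on different orbits, is the real theorem, and your proposal acknowledges rather than closes this. As it stands the argument establishes the reductions but not the classification, so the proof is incomplete.
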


\section{Main result}

We are interested in the structure of completely positive linear maps (CP) on matrices \cite{Lindblad} since the physical transformations are all represented by CP maps.

Consider the linear super-operator $\Lambda\in\trans{\cH}$ defined as follows:
$$
\Lambda(X) = \frac1n \Tr{X}\I.
$$
This is the \emph{completely depolarizing channel}, which sends every density operator to the maximally mixed state $\frac1n\I$. The Choi-Jamio{\l}kowski representation of $\Lambda$ is
$$
\jam{\Lambda} = \frac1n\I\ot\I.
$$
Now we can state our result:

\begin{thrm}
Let $\Phi\in\trans{\cH}$ be a quantum channel. Then $\Phi(\rho)\prec
\Phi(\sigma)$, whenever $\rho\prec \sigma(\rho,\sigma\in\density{\cH})$, if and only if one of
the following conditions hold.
\begin{enumerate}[(i)]
\item $\Phi(\rho) = \omega$ for some $\omega\in\density{\cH}$.
\item $\Phi(\rho) = \lambda U\rho U^\dagger + (1-\lambda)\frac1n\I$ for some $\lambda\in\Br{-\frac1{n^2-1},1}$
and unitary $U\in\unitary{\cH}$.
\item $\Phi(\rho) = \lambda U\rho^\t U^\dagger + (1-\lambda)\frac1n\I$ for some $\lambda\in\Br{-\frac1{n-1},\frac1{n+1}}$
and unitary $U\in\unitary{\cH}$.
\end{enumerate}
\end{thrm}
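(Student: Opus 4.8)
The plan is to reduce the theorem to Hiai's characterization (Proposition~\ref{prop:hiai}) and then carve out exactly which of the maps there are \emph{quantum channels}. First I would observe that majorization on $\density{\cH}$ is the restriction of majorization on $\herm{\cH}$, but one must be careful: the hypothesis of the theorem is weaker than the hypothesis of Proposition~\ref{prop:hiai}, since we only assume $\Phi(\rho)\prec\Phi(\sigma)$ for \emph{states} $\rho\prec\sigma$, not for all Hermitian operators. So the first real step is a \emph{bootstrapping lemma}: if $\Phi$ is a positive (indeed trace-preserving completely positive) map and preserves majorization on states, then its natural linear extension preserves majorization on all of $\herm{\cH}$. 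This should follow by writing an arbitrary $A\in\herm{\cH}$ as $A = c\rho_+ - d\rho_- $ with $\rho_\pm$ states and $c,d\geqslant 0$, together with the shift/scaling invariance of $\prec$ and the fact that $\rho\prec\sigma$ together with $\Tr{\rho}=\Tr{\sigma}$ is what majorization on states really means; alternatively one notes that $X\prec Y$ on $\herm{\cH}$ with $\Tr X=\Tr Y=t$ can be renormalized to states when $t>0$, and handled by translation when $t=0$. Once this lemma is in place, Proposition~\ref{prop:hiai} applies and $\Phi$ must be of type (a), (b), or (c) there.

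The second step is to impose trace preservation and complete positivity on each of the three Hiai forms. Form (a), $\Phi(X)=\Tr{X}A_0$, is trace preserving iff $\Tr{A_0}=1$ and completely positive iff $A_0\geqslant 0$, i.e. $A_0=\omega\in\density{\cH}$; this gives case (i). For forms (b) and (c), trace preservation forces a linear relation between $\alpha$ and $\beta$: applying $\Tr{\cdot}$ to $\Phi(X)$ with $\Tr X = 1$ gives $\alpha + n\beta = 1$, so $\beta = (1-\alpha)/n$ and, writing $\lambda\defeq\alpha$, the map becomes $\Phi(X)=\lambda\,U X U^\dagger + (1-\lambda)\tfrac1n\I$ (resp.\ with $X^\t$). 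So the only remaining question is: for which real $\lambda$ is this map completely positive? This is a standard computation with the Choi--Jamio{\l}kowski operator.

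The third step — the technical heart — is to compute $\jam{\Phi}$ for the two families and determine positivity. Since $\jam{\cdot}$ is linear and $\jam{\Lambda}=\tfrac1n\I\ot\I$, for form (b) we get $\jam{\Phi} = \lambda\,(U\ot\I)\Col{\I}\Row{\I}(U\ot\I)^\dagger + (1-\lambda)\tfrac1n\I\ot\I$; since $\Col{\I}\Row{\I} = n\,\out{\phi^+}{\phi^+}$ for the maximally entangled vector $\ket{\phi^+}$, and conjugating by $U\ot\I$ just moves to another maximally entangled projector, positivity of $\jam{\Phi}$ is equivalent to $(1-\lambda)\tfrac1n \geqslant 0$ on the complement of that rank-one projector and $\lambda n + (1-\lambda)\tfrac1n\geqslant 0$ on its range; these two inequalities give exactly $-\tfrac1{n^2-1}\leqslant\lambda\leqslant 1$, yielding case (ii). For form (c), one uses that $\jam{X\mapsto UX^\t U^\dagger}$ is (up to the $U$-conjugation) the swap operator, whose eigenvalues are $\pm 1$ with multiplicities $\binom{n+1}{2}$ and $\binom{n}{2}$; positivity of $\lambda\cdot(\text{swap}) + (1-\lambda)\tfrac1n\I\ot\I$ then reads $\lambda + (1-\lambda)\tfrac1n\geqslant 0$ and $-\lambda + (1-\lambda)\tfrac1n\geqslant 0$, i.e.\ $-\tfrac1{n-1}\leqslant\lambda\leqslant\tfrac1{n+1}$, which is case (iii). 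Conversely, each map listed in (i)--(iii) is manifestly a quantum channel (trace preservation is immediate, complete positivity by the bounds just derived) and preserves majorization: for (i) trivially, and for (ii),(iii) because $U(\cdot)U^\dagger$ and transposition preserve spectra hence majorization, $\tfrac1n\I$ is the bottom element, and a convex-type combination $\lambda(\cdot)+(1-\lambda)\tfrac1n\I$ preserves $\prec$ for $\lambda$ in the stated range — this last point for negative $\lambda$ needs the Hiai direction (ii)$\Rightarrow$(i) applied in reverse, or a direct check via $\lambda(A)\prec\lambda(B)$.

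I expect the main obstacle to be the bootstrapping lemma in Step~1: making precise that majorization-preservation \emph{on states only} already forces the global structure, since Proposition~\ref{prop:hiai} is stated for all Hermitian operators and a priori a channel could behave well on the state simplex while being wild off it — though linearity plus the cone decomposition $A=c\rho_+-d\rho_-$ should close this gap cleanly. The positivity computations in Step~3 are routine once the spectral pictures of $\Col{\I}\Row{\I}$ and the swap operator are invoked.
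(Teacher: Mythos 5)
Your Steps 2 and 3 coincide with the paper's own proof: it likewise invokes Proposition~\ref{prop:hiai}, imposes $\alpha+n\beta=1$ by trace preservation, and reads off the parameter intervals from the eigenvalues of $\jam{\Psi}=\alpha \col{\I}\row{\I}+\beta\,\I\ot\I$ (namely $n\alpha+\beta$ and $\beta$) and of $\alpha\jam{\rT}+\beta\,\I\ot\I$ (namely $\pm\alpha+\beta$), exactly as you do, and it also dismisses the converse as immediate. The one place you genuinely diverge is Step 1, the bootstrapping lemma: the paper applies Proposition~\ref{prop:hiai} with no comment, even though that proposition assumes majorization preservation on all of $\Herm(\cH)$ while the theorem only assumes it on $\density{\cH}$. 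You are right to isolate this as the real issue, but your sketched closure does not work as stated. A pair $X\prec Y$ in $\Herm(\cH)$ with $\Tr{X}=\Tr{Y}=t>0$ cannot in general be renormalized to states, since neither operator need be positive semidefinite; and the translation $X\mapsto X+s\I$ only yields $\Phi(X)+s\Phi(\I)\prec \Phi(Y)+s\Phi(\I)$, from which $\Phi(X)\prec\Phi(Y)$ does not follow unless $\Phi(\I)$ is a scalar multiple of $\I$ --- precisely what you cannot assume a priori, as the constant channel $\rho\mapsto\omega$ has $\Phi(\I)=n\omega$. The cone decomposition $A=c\rho_+-d\rho_-$ does not rescue this either, because $\prec$ is not compatible with taking differences. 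So the lemma needs a genuinely different argument (for instance, rerunning the relevant test cases of Hiai's proof using only pairs drawn from $\density{\cH}$), or it must be conceded as a gap --- one which, to be fair, the paper's own proof silently shares.
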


\begin{proof}
$(\Longleftarrow)$. This direction of the proof is trivially.\\
$(\Longrightarrow)$. From the Proposition~\ref{prop:hiai}, it follows that, either
$$
\Phi (\rho) = \Tr{\rho} A_0 = A_0.
$$
Since $\Phi$ is a quantum channel, we have that $A_0\geqslant0$ and $\Tr{A_0} =1$. Setting $A_0 =\omega\in\density{\cH}$, we have $\Phi(\rho) = \omega$ for all $\rho\in\density{\cH}$.
Or $\Phi$ must be of the following forms:
$$
\Phi(\rho) = \alpha U\rho U^\dagger + \beta \I\quad\text{or}\quad\Phi(\rho) = \alpha U\rho^\t U^\dagger + \beta \I.
$$
Let
$$
\Psi(\rho) \equiv \alpha\rho + \beta\I\quad\text{or}\quad\Psi(\rho) \equiv \alpha\rho^\t + \beta\I.
$$
Thus $\Phi(\rho) = U\Psi(\rho)U^\dagger$ and $\Psi = \alpha\I_{\lin{\cH}} + \beta\Lambda$ or $\alpha \rT + \beta\Lambda$, where $\rT$ is the transpose super-operator, defined by $\rT(\rho) = \rho^\t$. Therefore $\Phi$ is completely positive if and only if $\Psi$ is completely positive.
By the trace-preservation of $\Psi$, we have that $\alpha + n\beta = 1$. Moreover, we have the Choi-Jamio{\l}kowski representation
$$
\jam{\Psi} = \alpha \col{\I}\row{\I} + \beta \I\ot\I \quad\text{or}\quad \jam{\Psi} = \alpha \jam{\rT} + \beta \I\ot\I.
$$
By the semi-definite positivity of $\jam{\Psi}$, we have
$$
\begin{cases}
n\alpha + \beta &\geqslant0\\
\beta&\geqslant0
\end{cases} \quad\text{or}\quad\begin{cases}
\alpha + \beta &\geqslant0 \\
-\alpha + \beta&\geqslant0
\end{cases}
$$
The above inequalities can be solved via the following identification: $\alpha = \lambda$ and $\beta = \frac{1-\lambda}n$. Since the $\jam{\rT}$ is a self-adjoint and unitary, thus all eigenvalues of $\jam{\rT}$ are $\set{\pm1}$. The desired conclusions can be obtained via some simple calculations.
\end{proof}

\section{Discussion}

Recently, in \cite{Li}, Li obtained the following result:
\begin{prop}
Let $\rho,\sigma\in\density{\cH}$. If $\rho\prec \sigma$ and $\rS(\rho) = \rS(\sigma)$, then $\lambda^\downarrow(\rho) = \lambda^\downarrow(\sigma)$.
\end{prop}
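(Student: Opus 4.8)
The plan is to reduce the statement to the \emph{strict} Schur-concavity of the Shannon entropy on the probability simplex. First I would set $p = \lambda^\downarrow(\rho)$ and $q = \lambda^\downarrow(\sigma)$; these are probability vectors in $\real^n$, and by the proposition identifying operator majorization with vector majorization (i.e. $A\prec B$ iff $\lambda(A)\prec\lambda(B)$) we have $p\prec q$. Writing $H$ for the Shannon entropy $H(x) = -\sum_i x_i\log_2 x_i$, the hypothesis $\rS(\rho)=\rS(\sigma)$ becomes $H(p)=H(q)$, since the von Neumann entropy of a state is the Shannon entropy of its spectrum. So the whole assertion reduces to a statement about two probability vectors: if $p\prec q$ and $H(p)=H(q)$, then $p=q$.

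Next I would invoke the Hardy--Littlewood--Polya characterization: $p\prec q$ gives $p = Bq$ for some bi-stochastic $B$, and by the Birkhoff--von Neumann theorem $B = \sum_{\pi} c_\pi P_\pi$ is a convex combination of permutation matrices. Using the concavity of $H$ together with its invariance under permutation of coordinates,
$$
H(p) = H\Big(\sum_\pi c_\pi P_\pi q\Big) \;\geqslant\; \sum_\pi c_\pi H(P_\pi q) \;=\; \sum_\pi c_\pi H(q) \;=\; H(q).
$$
This re-proves $H(p)\geqslant H(q)$, but the relevant point is the equality case: since $H(p)=H(q)$ by hypothesis, the displayed Jensen inequality must hold with equality.

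The key analytic input, and the step I expect to need the most care, is that $H$ is \emph{strictly} concave on the simplex. This holds because $\phi(t) = -t\log_2 t$ is strictly concave on $[0,\infty)$ (its second derivative is $-1/(t\ln 2)<0$ on $(0,\infty)$, and it extends continuously to $t=0$), so $H(x)=\sum_i\phi(x_i)$ is strictly concave as a function on the probability simplex, including at boundary points where some coordinates vanish; making the boundary cases rigorous is the only subtlety. Granting strict concavity, equality in the Jensen step above forces all the vectors $P_\pi q$ with $c_\pi>0$ to coincide; picking one such permutation $\pi_0$ gives $p = \sum_\pi c_\pi P_\pi q = P_{\pi_0} q$, so $p$ is a permutation of $q$.

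Finally, since $p$ and $q$ are both arranged in decreasing order and are permutations of one another, their multisets of entries agree and hence $p=q$, i.e. $\lambda^\downarrow(\rho)=\lambda^\downarrow(\sigma)$, as claimed. One could instead quote a ready-made strict Schur-concavity theorem for $H$ from the majorization literature (e.g. Marshall and Olkin), but deriving it from Birkhoff--von Neumann keeps the argument self-contained with the tools already at hand.
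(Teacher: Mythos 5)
Your argument is correct. Note that the paper itself does not prove this proposition: it is quoted from the cited work of Li and Wang without proof, so there is no in-paper argument to compare against. Your route --- reduce to the spectra via the equivalence $A\prec B \Leftrightarrow \lambda(A)\prec\lambda(B)$, write $p=Bq$ by Hardy--Littlewood--Polya, expand $B$ as a convex combination of permutation matrices by Birkhoff--von Neumann, and then exploit the equality case of Jensen's inequality for the strictly concave, permutation-invariant function $H$ --- is the standard proof of strict Schur-concavity of the Shannon entropy, and every step checks out. The one point you flag as delicate, strict concavity of $\phi(t)=-t\log_2 t$ at $t=0$, does go through: for $0=t_1<t_2$ and $\mu\in(0,1)$ one computes $\phi((1-\mu)t_2)=(1-\mu)\phi(t_2)+(1-\mu)t_2\log_2\frac{1}{1-\mu}>(1-\mu)\phi(t_2)$, and strict concavity of the sum $\sum_i\phi(x_i)$ on the simplex follows since two distinct probability vectors differ in at least one coordinate. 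The conclusion $p=P_{\pi_0}q$ together with both vectors being sorted in decreasing order then forces $p=q$. The proof is self-contained relative to the two propositions already stated in the paper's preliminaries, which is a virtue; the only thing you give up by not simply citing strict Schur-concavity from Marshall--Olkin is brevity.
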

Recall that the von Neumann entropy of a quantum state $\rho$ is defined as $\rS(\rho) = -\Tr{\rho\log_2\rho}$.
This result indicates that, if $\rho\prec \sigma$ and $\rS(\rho) = \rS(\sigma)$, then there exists some unitary $U\in\unitary{\cH}$ such that $\rho = U\sigma U^\dagger$. Furthermore, we can have the following conclusion: if $\rho\prec\sigma$ and $\sigma\prec \rho$, denoted by $\rho\sim\sigma$, then $\rho = U\sigma U^\dagger$ for some unitary $U\in\unitary{\cH}$. Thus if a quantum channel preserves majorization, i.e. $\Phi(\rho)\prec\Phi(\sigma)$ whenever $\rho\prec\sigma$, then we can infer that $\Phi(\rho)\sim\Phi(\sigma)$ whenever $\rho\sim\sigma$. In other words,
$$
\rho = U\sigma U^\dagger \Longrightarrow \Phi(\rho) = V\Phi(\sigma)V^\dagger.
$$
Finally, The characterization of majorization-preserving quantum
channels can be reduced to the characterization of unitary
equivalence preserving quantum channels.

In what follows, we consider the characterization of
majorization-preserving quantum channels for a restricted
majorization class of a fixed quantum state with a complete
different eigenvalues in the qubit case.

\begin{thrm}
Let $\Phi\in\trans{\complex^2}$ be a unital channel and
$\rho_0\in\density{\complex^2}$ be fixed with its both eigenvalues
satisfying that $\lambda_1>\lambda_2>0$. Then $\Phi(\rho)\sim
\Phi(\rho_0)$, whenever $\rho\sim
\rho_0(\rho\in\density{\complex^2})$, if and only if one of the
following conditions hold.
\begin{enumerate}[(i)]
\item $\Phi(\rho) = \frac12\I$.
\item $\Phi(\rho) = \lambda U\rho U^\dagger + (1-\lambda)\frac12\I$ for some $\lambda\in\Br{-\frac13,1}$
and unitary $U\in\unitary{\complex^2}$.
\item $\Phi(\rho) = \lambda U\rho^\t U^\dagger + (1-\lambda)\frac12\I$ for some $\lambda\in\Br{-1,\frac13}$
and unitary $U\in\unitary{\complex^2}$.
\end{enumerate}
\end{thrm}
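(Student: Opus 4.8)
The plan is to pass to the Bloch representation of qubit channels, where the problem becomes a rigidity statement about a single real $3\times 3$ matrix. The ``if'' direction is immediate: each of (i)--(iii) with the indicated parameter ranges is literally one of the three forms appearing in the main result with $n=2$, so each such $\Phi$ preserves majorization and hence sends $\sim$-equivalent states to $\sim$-equivalent states; in particular $\Phi(\rho)\sim\Phi(\rho_0)$ whenever $\rho\sim\rho_0$.

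For the ``only if'' direction I would first invoke the discussion following the main result: for $\rho,\rho_0\in\density{\complex^2}$ one has $\rho\sim\rho_0$ exactly when $\rho=U\rho_0U^\dagger$ for some $U\in\unitary{\complex^2}$. Hence the hypothesis says precisely that $\Phi(U\rho_0U^\dagger)$ and $\Phi(\rho_0)$ have the same spectrum for \emph{every} unitary $U$. Writing $\rho_0=\tfrac12(\I+r_0\cdot\sigma)$ with Bloch vector $r_0\in\real^3$, the eigenvalues of $\rho_0$ are $\tfrac12(1\pm|r_0|)$, so the assumption $\lambda_1>\lambda_2>0$ is the same as $0<|r_0|<1$ --- and it is $|r_0|>0$ that will carry the argument. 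Since $\Phi$ is unital it is determined by a real matrix $M$ via $\Phi\big(\tfrac12(\I+r\cdot\sigma)\big)=\tfrac12(\I+(Mr)\cdot\sigma)$. Using the covering homomorphism $\unitary{\complex^2}\to\mathrm{SO}(3)$, $U\mapsto R_U$, one has $U\rho_0U^\dagger=\tfrac12(\I+(R_Ur_0)\cdot\sigma)$, and as $U$ ranges over all unitaries $R_Ur_0$ sweeps out the entire sphere of radius $|r_0|>0$. Because a qubit state's spectrum depends only on the length of its Bloch vector, the hypothesis becomes: $|Mr|$ is constant over that sphere. Testing this on the eigenvectors of $M^{\t}M$ then forces $M^{\t}M=s^{2}\I$ for some scalar $s\geqslant 0$.

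It remains to translate $M^{\t}M=s^{2}\I$ back into channel language. If $s=0$ then $M=0$, i.e. $\Phi(\rho)=\tfrac12\I$, giving (i). If $s>0$ then $s^{-1}M\in\mathrm{O}(3)$, and I would split on its determinant. If $\det(s^{-1}M)=1$, then $s^{-1}M=R_U$ for some unitary $U$; since a unital qubit channel is determined by its Bloch matrix, $\Phi(\rho)=s\,U\rho U^\dagger+(1-s)\tfrac12\I$, and positivity of the Choi operator of the completely positive map $\Phi$ forces $s\in[-\tfrac13,1]$ (exactly the computation in the proof of the main result with $n=2$), hence $s\in(0,1]$: this is (ii). If $\det(s^{-1}M)=-1$, write $s^{-1}M=R_U\,\mathrm{diag}(1,-1,1)$ and note that $\mathrm{diag}(1,-1,1)$ is precisely the Bloch matrix of the transpose super-operator $\rT$, so that $\Phi(\rho)=s\,U\rho^{\t}U^\dagger+(1-s)\tfrac12\I$; complete positivity now forces $s\in[-1,\tfrac13]$, hence $s\in(0,\tfrac13]$: this is (iii).

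The part I expect to be the main obstacle is exactly this last bookkeeping: one must match the two cosets of $\mathrm{O}(3)$ correctly to the unitary and transposed-unitary forms, identify the transpose map's Bloch matrix, and verify that the three cases are exhaustive even though they are not mutually exclusive --- for instance a determinant-reversing $M$ of small operator norm also arises from form (ii) with a \emph{negative} value of $\lambda$ (and a different unitary), which is why the windows $[-\tfrac13,1]$ and $[-1,\tfrac13]$ are allowed to overlap without any inconsistency. A secondary point to check with care is that ``unital'' is genuinely used: it is what kills the affine (translation) part of the Bloch action, so that $\Phi$ corresponds to a linear $M$ with no inhomogeneous term.
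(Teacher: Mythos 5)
Your argument is correct, but it takes a genuinely different route from the paper's. The paper works directly in the operator picture: from unitality and the distinctness of the eigenvalues of $\rho_0$ it extracts that $\Phi(\out{u}{u})$ has constant spectrum $\set{\mu,1-\mu}$ over all pure states $\ket{u}$, then (when $\mu\neq\tfrac12$) renormalizes to $\Psi(X)=\frac{1}{1-2\mu}\bigl(V^\dagger\Phi(X)V-\mu\Tr{X}\I\bigr)$, which sends pure states to pure states, and invokes the structure of rank-one-preserving, unitary-similarity-preserving linear maps to conclude $\Psi(X)=WXW^\dagger$ or $\Psi(X)=WX^\t W^\dagger$; the parameter windows then come from complete positivity exactly as in your last step. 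Your Bloch-sphere argument replaces the appeal to rank-one preservers by the elementary rigidity statement that a real $3\times3$ matrix $M$ with $\abs{Mr}$ constant on a sphere of positive radius must satisfy $M^\t M=s^2\I$, followed by the decomposition $\mathrm{O}(3)=\mathrm{SO}(3)\sqcup\mathrm{SO}(3)\cdot\mathrm{diag}(1,-1,1)$ and the surjectivity of $\mathrm{SU}(2)\to\mathrm{SO}(3)$. This is more self-contained and makes transparent that only $\lambda_1>\lambda_2$ (i.e.\ $\abs{r_0}>0$) is actually used; your $s=0$ versus $s>0$ dichotomy also streamlines the paper's separate case split on $\rank\Phi(\out{u}{u})$. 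The trade-off is that your method is intrinsically two-dimensional: for $d>2$ the adjoint image of $\unitary{\cH_d}$ is a proper subgroup of $\mathrm{SO}(d^2-1)$ and the orbit of a Bloch vector is no longer a full sphere, so the rank-one-preserver route (or something like it) is what one would need for the conjecture stated after this theorem. Your closing remarks --- that the overlapping windows cause no inconsistency because a determinant-reversing $M$ of small norm is realized by both forms, and that unitality is what kills the translation part of the Bloch action --- are both accurate and worth keeping.
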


\begin{proof}
($\Longleftarrow$) This direction of proof is trivial.\\
($\Longrightarrow$) Let $\rho_0 = \lambda_1\out{0}{0} +
\lambda_2\out{1}{1}$. If $\rho\sim \rho_0$, then there exists a
unitary $U$ such that $\rho = U\rho_0U^\dagger$, thus $\rho$ lies in
the unitary orbit of $\rho_0$. By the property of $\Phi$, we see
that
$$
\Phi(U\rho_0U^\dagger) = V\Phi(\rho_0)V^\dagger.
$$
Since $\Phi(\I)=\I$, it follows that
\begin{eqnarray}
\Phi(U\out{0}{0}U^\dagger) &=& V\Phi(\out{0}{0})V^\dagger,\\
\Phi(U\out{1}{1}U^\dagger) &=& V\Phi(\out{1}{1})V^\dagger.
\end{eqnarray}
Thus,
\begin{eqnarray}
\Phi(\out{0}{0})\sim\Phi(\out{u}{u})\sim\Phi(\out{1}{1})
\end{eqnarray}
for any $\ket{u}\in\complex^2$. This implies that the rank of
$\Phi(\out{u}{u})$ is independent of $\ket{u}\in\complex^2$.

Case (i). If $\rank(\Phi(\out{u}{u}))=2$, then by the spectral
decomposition theorem, it follows that
$$
\Phi(\out{u}{u}) = \mu \out{\alpha}{\alpha} + (1-\mu)
\out{\beta}{\beta},
$$
where $\mu\in(0,1)$. Moreover, there is a unitary $V$ such that
$V\ket{0}=\ket{\alpha}$ and $V\ket{1}=\ket{\beta}$. This implies
that
$$
\Phi(\out{u}{u}) = V(\mu \out{0}{0} + (1-\mu) \out{1}{1})V^\dagger.
$$
If $\mu=1-\mu$, then $\mu=\tfrac12$. That is, $\Phi(\out{u}{u}) =
\tfrac12\I$, furthermore, $\Phi(\rho)=\tfrac12\I$ for all
$\rho\in\density{\complex^2}$. If $\mu\neq\tfrac12$, define $\Psi$
as follows:
$$
\Psi(X) = \frac1{1 - 2\mu}(V^\dagger\Phi(X)V-\mu\Tr{X}\I).
$$
Hence we have that
$$
\Psi(\out{u}{u}) = \out{1}{1}.
$$
By the polarization identity for operators, we see that $\Psi$ is
rank-one preserving, it is also unitary similarity preserving, thus
there is a unitary $W$ such that
$$
\Psi(X) = W X W^\dagger\quad\text{or}\quad \Psi(X) = W X^\t
W^\dagger.
$$
Now if $\Psi(X) = W X W^\dagger$, then
$$
\Phi(X) = (1 - 2\mu)UXU^\dagger + \mu\Tr{X}\I,
$$
where $U=VW$. Letting $\lambda = 1-2\mu$ gives that
$$
\Phi(X) = \lambda UXU^\dagger + (1-\lambda)\Tr{X}\tfrac12\I
$$
for $\lambda\in[-\tfrac13,1)$ since a quantum channel is completely
positive. If $\Psi(X) = W X^\t W^\dagger$, then similarly,
$$
\Phi(X) = \lambda UX^\t U^\dagger + (1-\lambda)\Tr{X}\tfrac12\I
$$
for $\lambda\in(-1,\tfrac13]$.

Case (ii). If $\rank(\Phi(\out{u}{u}))=1$, then this corresponds to
Case (i) when $\mu=0,1$. Thus
$$
\Phi(X) = \lambda UXU^\dagger +
(1-\lambda)\Tr{X}\tfrac12\I\quad\forall \lambda\in\Br{-\tfrac13,1}
$$
or
$$
\Phi(X) = \lambda UX^\t U^\dagger +
(1-\lambda)\Tr{X}\tfrac12\I\quad\forall \lambda\in[-1,\tfrac13].
$$
The theorem is proved.
\end{proof}

Based on the above result, we come up a general \emph{conjecture}:
Let $\Phi\in\trans{\cH_d}$ be a unital quantum channel. Assume that
all eigenvalues $\lambda_i$ of a fixed quantum state $\rho_0$
satisfy that $\lambda_1>\cdots>\lambda_d>0$. Then
$\Phi(\rho)\sim\Phi(\rho_0)$ whenever $\rho\sim\rho_0$  if and only
if one of the following conditions hold.
\begin{enumerate}[(i)]
\item $\Phi(\rho) = \lambda U\rho U^\dagger + (1-\lambda)\frac1d\I$ for some $\lambda\in\Br{-\frac1{d^2-1},1}$
and unitary $U\in\unitary{\cH_d}$.
\item $\Phi(\rho) = \lambda U\rho^\t U^\dagger + (1-\lambda)\frac1d\I$ for some $\lambda\in\Br{-\frac1{d-1},\frac1{d+1}}$
and unitary $U\in\unitary{\cH_d}$.
\end{enumerate}

\section{Concluding remarks}

In this report, we characterized those quantum channels that preserves the majorization. We found that majorization-preserving quantum channels are equivalently described as unitary equivalence preserving quantum channels. Some open problems are still left for the future research. For example,

What can be derived for $\rS(\rho) = \rS(\sigma)$, where $\rho,\sigma\in\density{\cH}$?

If we can solve this problem, then we can use it to make an analysis of classification of quantum states according to the amount of the encoded information in quantum states. We can make further analysis of the von Neumann entropy preserving quantum channels over $\cH$: $\rS(\Phi(\rho)) = \rS(\rho)$. In particular, a special case, i.e. a unit-preserving quantum channel preserves the von Neumann entropy of a fixed quantum state, is characterized in \cite{Zhang}. Or we can still consider the characterization of quantum channel $\Phi$ such that
$$
\rS(\rho) = \rS(\sigma) \Longrightarrow \rS(\Phi(\rho)) = \rS(\Phi(\sigma)).
$$

\subsection*{Acknowledgement}
We want to express our heartfelt thanks to Fumio Hiai
for sending the reference \cite{Hiai}
to us. This project is supported by the Research Program of Hangzhou Dianzi University (KYS075612038).





\begin{thebibliography}{99}

\bibitem{Marshall}
A.W. Marshall and I. Olkin.
\newblock Inequalities: Theory of Majorization and Its Application.
\newblock Academic Press, New York (1979).

\bibitem{Nielsen}
M. Nielsen and Guifr\'{e} Vidal.
\newblock Majorization and the interconversion of bipartite states.
\newblock Quant. Inf. \& Comput. \textbf{1}(1), 76-93(2001).

\bibitem{Hiroshima}
T. Hiroshima.
\newblock Majorization criterion for distillability of a bipartite quantum state.
\newblock \href{http://dx.doi.org/10.1103/PhysRevLett.91.057902}{Phys. Rev. Lett. \textbf{91}, 057902 (2003).}

\bibitem{Partovi}
M.H. Partovi.
\newblock Entanglement detection using majorization uncertainty bounds.
\newblock \href{http://dx.doi.org/10.1103/PhysRevA.86.022309}{Phys. Rev. A \textbf{86}, 022309 (2012).}

\bibitem{Watrous} J.~Watrous, Theory of Quantum Information,
University of Waterloo, Waterloo (2008). See
\url{http://www.cs.uwaterloo.ca/~watrous/quant-info/}


\bibitem{Hardy}
G.H. Hardy, J.E. Littlewood and G. Polya.
\newblock Inequalities (2ed)
\newblock Cambridge University Press (1932).

\bibitem{Ando}
T. Ando.
\newblock Doubly stochastic matrices, and comparison of eigenvalues.
\newblock \href{http://dx.doi.org/10.1016/0024-3795(89)90580-6}{Linear Algebra Appl. \textbf{118}, 163-248 (1989).}

\bibitem{Hiai}
F. Hiai.
\newblock Similarity preserving linear maps on matrices.
\newblock \href{http://dx.doi.org/10.1016/0024-3795(87)90145-5}{Linear Algebra Appl. \textbf{97}, 127-139 (1987).}

\bibitem{Lindblad}
G.~Lindblda.
\newblock Completely positive maps and entropy inequalities.
\newblock \href{http://dx.doi.org/10.1007/BF01609396}{Commun. Math. Phys. \textbf{40}, 147-151 (1975).}

\bibitem{Li}
Y. Li and Y. Wang.
\newblock Further results on entropy and separability.
\newblock \href{http://dx.doi.org/10.1088/1751-8113/45/38/385305}{J. Phys. A: Math. Theor. \textbf{45}, 385305 (2012).}

\bibitem{Zhang}
L. Zhang and J. Wu.
\newblock Von Neumann entropy-preserving quantum operations.
\newblock \href{http://dx.doi.org/10.1016/j.physleta.2011.10.008}{Phys. Lett. \textbf{375}(47): 4163-4165 (2011).}

\end{thebibliography}
\end{document}